\documentclass[11pt]{article}
\usepackage{preamble}

\title{Distribution-Free Testing of Linear Functions on $\mathbb{R}^n$}
\author{Noah Fleming, Yuichi Yoshida}
\begin{document}
\maketitle

\begin{abstract}
We study the problem of testing whether a function $f\colon \mathbb{R}^n \rightarrow \mathbb{R}$ is  linear (i.e., both additive and homogeneous) in the \emph{distribution-free} property testing model, where the distance between functions is measured with respect to an unknown probability distribution over $\mathbb{R}$. We show that, given query access to $f$, sampling access to the unknown distribution as well as the standard Gaussian, and $\varepsilon > 0$, we can distinguish additive functions from functions that are $\varepsilon$-far from additive functions with $O\left(\frac{1}{\varepsilon}\log\frac{1}{\varepsilon}\right)$ queries, independent of $n$. Furthermore, under the assumption that $f$ is a continuous function, the additivity tester can be extended to a distribution-free tester for linearity using the same number of queries. On the other hand, we show that if we are only allowed to get values of $f$ on sampled points, then any distribution-free tester requires $\Omega(n)$ samples, even if the underlying distribution is the standard Gaussian.
\end{abstract}

\thispagestyle{empty}
\setcounter{page}{0}
\newpage


\section{Introduction}

Property testing of Boolean functions is a well studied area in which, given query access to a function $f\colon {\{0,1\}}^n \to \{0,1\}$ and a parameter $\varepsilon > 0$, the goal is to distinguish with high probability the case that $f$ satisfies some predetermined property $P$ from the case that $f$ is $\varepsilon$-far from satisfying $P$, that is, we need to change the values of $f(x)$ for at least an $\varepsilon$-fraction of $x \in {\{0,1\}}^n$ to make $f$ satisfy $P$.
Since the seminal work by Blum, Luby and Rubinfeld~\cite{BlumLR93}, property testing has become a thriving field, and many properties of Boolean functions have been shown to be  testable with a number of queries independent of $n$, including linear functions~\cite{BlumLR93}, low-degree polynomials~\cite{KaufmanR06,Bhattacharyya:2010kw} and $k$-juntas~\cite{Fischer:2004ds,Blais:2009kt,Blais:2015dt}.
For an introductory survey, we recommend~\cite{Goldreich17}.




In contrast to Boolean functions, only a few properties of functions on a Euclidean space, that is, $\mathbb{R}^n$, have been studied.
For a measurable function $f\colon \mathbb{R}^n \to \mathbb{R}$, $\varepsilon > 0$, and a property $P$, we say that $f$ is \emph{$\varepsilon$-far} from $P$ if
\[
  \Pr_{x \sim \mathcal{N}(0,I)}[f(x) \neq g(x)]>\varepsilon,
\]
for any measurable function $g\colon \mathbb{R}^n \to \mathbb{R}$ satisfying $P$, where $\mathcal{N}(0,I)$ is the standard Gaussian.
We say that an algorithm is a \emph{tester} for a property $P$ if, given query access to a measurable function $f\colon \mathbb{R}^n \to \mathbb{R}$, sampling access to the standard Gaussian, and $\varepsilon > 0$, it accepts with probability at least $2/3$ when $f$ satisfies $P$, and rejects with probability at least $2/3$ when $f$ is $\varepsilon$-far from $P$.
Testability of a variety of properties has been considered, including surface area of a set~\cite{KothariNOW14, Neeman14}, half spaces~\cite{Matulef:2010ds, matulef2009testing, MatulefORS10}, linear separators~\cite{BalcanBBY12}, high-dimensional convexity~\cite{0001FSS17}, and linear $k$-junta~\cite{2018arXiv180610057D}.


Although the standard Gaussian is natural, it barely appears in practice. In fact, we typically have little, if any, information about the underlying distribution. This raises the question of whether we can test when the underlying distribution of the data is unknown.
For a measurable function $f\colon \mathbb{R}^n \to \mathbb{R}$, $\varepsilon > 0$, a distribution $\mathcal{D}$ over $\mathbb{R}^n$, and a property $P$, we say that $f$ is \emph{$\varepsilon$-far} from $P$ with respect to $\mathcal{D}$ if
\[
  \Pr_{x \sim \mathcal{D}}[f(x) \neq g(x)]>\varepsilon,
\]
for any measurable function $g\colon \mathbb{R}^n \to \mathbb{R}$ satisfying $P$.
We say that an algorithm is a \emph{distribution-free tester} for a property $P$ if, given query access to a measurable function $f\colon \mathbb{R}^n \to \mathbb{R}$, sampling access to an \emph{unknown} distribution $\mathcal{D}$ over $\mathbb{R}^n$ as well as the standard Gaussian, and $\varepsilon > 0$, it accepts with probability at least $2/3$ when $f$ satisfies $P$, and rejects with probability at least $2/3$ when $f$ is $\varepsilon$-far from $P$ with respect to $\mathcal{D}$.
Distribution-free property testing is an attractive model because it makes minimal assumptions on the environment, and models the scenario most often occurring in practice.

We say that a function $f\colon \mathbb{R}^n \to \mathbb{R}$ is \emph{additive} if $f(x)+f(y) = f(x+y)$ for any $x,y \in \mathbb{R}^n$.
In this work, we consider distribution-free testing of additivity of functions $f\colon \mathbb{R}^n \to \mathbb{R}$ and show the following.
%
%
\begin{thm}\label{thm:intro-additivity}
  There exists a one-sided error distribution-free tester for additivity of $f\colon\mathbb{R}^n \rightarrow \mathbb{R}$ with $O\left(\frac{1}{\varepsilon} \log \frac{1}{\varepsilon}\right)$ queries.
\end{thm}
Previously no algorithm was known even when the underlying distribution $\mathcal{D}$ is the standard Gaussian.
As there is a trivial lower bound of $\Omega\left(\frac{1}{\varepsilon}\right)$, the query complexity of our tester is almost tight.

We say that a function $f\colon \mathbb{R}^n \to \mathbb{R}$ is \emph{homogeneous} if $cf(x) = f(cx)$ for any $x \in \mathbb{R}^n$ and $c \in \mathbb{R}$.
A function that is both additive and homogeneous is said to be \emph{linear}.
Although additivity and linearity are equivalent for functions over finite groups, there are (pathological) functions $f\colon \mathbb{R}^n \to \mathbb{R}$ that are additive but not homogeneous.
Hence, the testability of additivity does not immediately imply the testability of linearity.
However, when the input function is guaranteed to be continuous, we can also test linearity.
\begin{thm}\label{thm:intro-linearity}
  Suppose that the input function is guaranteed to be continuous.
  Then, there exists a one-sided error distribution-free tester for linearity with $O\left(\frac{1}{\varepsilon} \log \frac{1}{\varepsilon}\right)$ queries.
\end{thm}


It is also natural to assume that we can get values of the input function only on sampled points.
Specifically, we say that a (distribution-free) tester is \emph{sample-based} if it accesses the input function $f\colon \mathbb{R}^n \to \mathbb{R}$ through points sampled from the distributions $\mathcal{D}$ and $\mathcal{N}(0,I)$.
We show a strong lower bound for sample-based testers.
\begin{thm}\label{thm:intro-lower-bound}
  Any sample-based tester for the linearity of functions $f\colon \mathbb{R}^n \to \mathbb{R}$ requires $\Omega(n)$ samples, even when $\mathcal{D} = \mathcal{N}(0,I)$.
\end{thm}
This theorem shows a sharp contrast between query-based and sample-based testers in their testability of properties of functions on a Euclidean space.
We note that we can show the same lower bound for testing additivity with an almost identical proof.

\subsection{Related Work}
The question of property testing first appeared (implicitly) in the work of Blum, Luby and Rubinfeld~\cite{BlumLR93}. Among the problems that they studied was linearity testing. Their algorithm, now famously known as the BLR test, has played a key role in the design of probabilistically checkable proofs~\cite{AroraLMSS92, BellareGS95, Hastad96} and this connection was some of the early motivation for the field of property testing.
Since the original paper, the parameters of the BLR test have been extensively refined. Much of this work focused on reducing the amount of randomness, due to this being a key parameter in probabilistically checkable proofs, as well as analyzing the rejection probability (see~\cite{raskhodnikova2014linearity} for a survey). Another line of works considered the testing linearity over more general domains. The works of~\cite{BlumLR93, BenOr:2008cg,Oono:2016ee} showed that the BLR test can be used to test the linearity of any function with $f\colon G \rightarrow H$ for finite groups $G$ and $H$ with $O(1/\varepsilon)$ queries. Following this, a body of work~\cite{GemmellLRSW91,ErgunKR01,ArBCG93,KiwiMS99} constructed testers for linearity of functions $f\colon S \rightarrow \mathbb{R}$, where $S$ is a finite subset of rational numbers, and the distance is measured with respect to the uniform distribution over $S$. See~\cite{KiwiMS00} for a survey. These results were phrased in terms of approximate self-testing and correcting programs. In this setting the queries to $f$ return a finite approximation of $f(x)$. Although these results are arguably the most related to our work, our proof differs significantly from theirs and instead takes inspiration from the original BLR test.



Distribution-free testing (for graph properties) was first defined by Goldreich~et~al.~\cite{GoldreichGR98}, though the first distribution-free testers for non-trivial properties appeared much later in the work of Halevy and Kushilevitz~\cite{HalevyK07}. Subsequently, distribution-free testers have been considered for a variety of Boolean functions including low-degree polynomials, dictators, and monotone functions~\cite{HalevyK07}, $k$-juntas~\cite{HalevyK07,LiuCSSX19,Bshouty19,Belovs19}, conjunctions, decision lists, and linear threshold functions~\cite{GlasnerS07}, monotone and non-monotone monomials~\cite{DolevR11}, and monotone conjunctions~\cite{GlasnerS07, ChenX16}. However, to our knowledge the only (partial) distribution-free tester for a class of function on the Euclidean space is due to Harms~\cite{Harms19} who gave an efficient tester for half spaces, that is, functions $f\colon \mathbb{R}^n \to \{0,1\}$ of the form $f(x) = \sgn(w^\top x -\theta)$ for some $w \in \mathbb{R}^n$ and $\theta \in \mathbb{R}$, over any rotationally invariant distribution.


\subsection{Proof Technique}

The construction of our tester for additivity will be done in two steps. First, we construct a constant-query tester for additivity over the standard Gaussian distribution $\mathcal{N}(0,I)$. Our tester will accept linear functions with probability $1$, and so the majority of the work is in showing that if the test accepts the given function $f\colon \mathbb{R}^n \to \mathbb{R}$ then, with high probability, $f$ is close to an additive function. To do so, we show that if $f$ passes a series of tests then there exists a related function $g\colon\mathbb{R}^n \to\mathbb{R}$, defined from $f$, which is additive. Furthermore, if $f$ is linear then $f=g$. The definition of $g$ will allow us to obtain query access to it with high probability, and so we can simply estimate the distance between $f$ and $g$. At a high-level, this is somewhat similar to the BLR test, however operating over $\mathcal{N}(0,I)$ rather than the uniform distribution presents its own set of non-trivial challenges. We discuss these, as well as the definition of $g$ at the start of Section~\ref{sec:additivity-gaussian}.

It is fairly straightforward to generalize this tester for additivity to a distribution-free tester. To do so, we run the additivity tester for the standard Gaussian, except that testing the distance between $f$ and $g$ will now be done using samples from the unknown $\mathcal{D}$. This crucially relies on our ability to draw samples from the standard Gaussian.

Any additive function $f\colon \mathbb{R}^n \rightarrow \mathbb{R}$ is linear over the rationals, meaning that $f(qx) = qf(x)$ for every $q \in \mathbb{Q}$. Therefore, in order to test linearity it remains to test whether this holds also for irrationals. Assuming that $f$ is continuous we are able to modify our tester to show that this implies that the additive function $g$ is continuous as well. We then leverage the fact that any continuous additive function is linear in order to obtain our linearity tester.

To prove Theorem~\ref{thm:intro-lower-bound},
the lower bound on sample-based testers for linearity, we construct two distributions, one supported on linear functions, and the other supported on functions which are far from linear. Consider drawing a function $f$ from one of these two distributions with equal probability.
By Yao's minimax principle it suffices to show that any deterministic algorithm which receives $n$ samples from $\mathcal{N}(0,I)$, together with their evaluations on $f$, is unable to distinguish, with high probability, which of the two distribution $f$ came from. To construct the distribution on linear functions, we sample $w \sim \mathcal{N}(0,I)$ and return $f(x):= w^\top x$. Our distribution on functions which are far from linear is designed so that any function $f$ from this distribution satisfies $f(x+y) \neq f(x) + f(y)$ with probability $1$ over $x,y \sim \mathcal{N}(0,I)$. To do so, for every $x \in \mathbb{R}^n$ we sample $\varepsilon_x$ from a one-dimensional Gaussian and return $f(x):= w^\top x +\varepsilon_x$. It is not difficult to show that such functions are far from linear. 

\subsection{Organization}
The remainder of the paper is organized as follows. We reivew basic facts on probability distributions in Section~\ref{sec:pre}. In Section~\ref{sec:additivity} we develop our distribution-free tester for additivity by first constructing a tester for additivity over the standard Gaussian in Section~\ref{sec:additivity-gaussian}. We generalize this tester to the distribution-free setting in Section~\ref{sec:distribution-free-additivity} and to a tester for linearity in Section~\ref{sec:linearity}. Finally, we end with our lower bound on the sampling model in Section~\ref{sec:lower-bound}.


\section{Preliminaries}\label{sec:pre}

Let $\cal D$ and $\cal D'$ be probability distributions on the same domain $\Omega$. Then, the \emph{total variation distance} between them, denoted by $\dtv(\cal D, D')$, is defined as
\[ \dtv(\mathcal{D}, \mathcal{D}'):= \frac{1}{2} \int_{\Omega} |\mathcal{D}(x) - \mathcal{D}'(x)| dx. \]
The \emph{Kullback-Leibler divergence} (or \emph{KL-divergence}) of $\mathcal{D}'$ from $\cal D$, denoted $\dkl(\mathcal{D} \| \mathcal{D}')$, is defined as
\[ \dkl(\mathcal{D} \| \mathcal{D}') = \int_{\Omega} \mathcal{D}(x) \log \left( \frac{\mathcal{D}(x)}{\mathcal{D}'(x)} \right) dx. \]
We will use the KL-divergence to upper bound the total variation distance, using the following inequality. 
\begin{thm}[Pinsker's Inequality]\label{thm:Pinskers} Let $\mathcal{D}$ and $\mathcal{D}'$ be probability distributions on the same domain $\Omega$. Then,
  \[ \dtv(\mathcal{D}, \mathcal{D}') \leq \sqrt{ \frac{1}{2} \dkl (\mathcal{D} \| \mathcal{D}')}.\]
\end{thm}
The following allows us to bound the KL-divergence between two Gaussian distributions.
\begin{lem}\label{lem:TVD-bound-on-Gaussian}
Let $\mathcal{D} = \mathcal{N}(\mu_1, \Sigma_1)$ and  $\mathcal{D}' = \mathcal{N}(\mu_2, \Sigma_2)$ be multivariate Gaussian distributions with $\mu_1, \mu_2 \in \mathbb{R}^n$ and invertible $\Sigma_1, \Sigma_2 \in \mathbb{R}^{n \times n}$. Then,
\[ \dkl(\mathcal{D} \| \mathcal{D}') = \frac{1}{2} \left( \log  \left(\frac{\det \Sigma_2}{ \det \Sigma_1} \right) + \trace \Big( {(\Sigma_2)}^{-1} \Sigma_1 \Big) - n + {(\mu_2 - \mu_1)}^\top \Sigma_2^{-1} (\mu_2 - \mu_1)\right). \]
\end{lem}

We record a useful lemma about total variation distance of Gaussians with shared covariance matrices.
\begin{lem}\label{lem:bound-TV-different-means}
  Consider two Gaussian distributions $\mathcal{N}(\mu_1,\Sigma), \mathcal{N}(\mu_2,\Sigma)$ with shared invertible covariance matrices $\Sigma \in \mathbb{R}^{n \times n}$. Then $\dtv(\mathcal{N}(\mu_1,\Sigma), \mathcal{N}(\mu_2,\Sigma)) \leq \phi$ holds if $\| \mu_1 - \mu_2 \|_2 \leq  2 \phi / \sqrt{ \| \Sigma^{-1} \|_2}$.
\end{lem}
\begin{proof}
  Denote  $\mu := \mu_1 - \mu_2$.
  Applying Lemma~\ref{lem:TVD-bound-on-Gaussian}, we have $
    \dtv(\mathcal{N}(\mu_1,\Sigma), \mathcal{N}(\mu_2, \Sigma)) = \sqrt{\frac{1}{4}\mu^\top \Sigma^{-1} \mu}$. Now, because $\Sigma$ is PSD, $\mu^\top \Sigma^{-1} \mu \leq \|\mu\|_2^2 \|\Sigma^{-1}\|_2$, where $\| \cdot \|_2$ is the spectral matrix norm. Therefore, we have  $\dtv(\mathcal{N}(\mu_1,\Sigma), \mathcal{N}(\mu_2, \Sigma)) \leq  \frac{1}{2} \|\mu\|_2 \sqrt{  \|\Sigma^{-1}\|_2} \leq \phi$.


\end{proof}

\section{Testing Additivity}\label{sec:additivity}
In this section, we develop our distribution-free tester for additivity.
For convenience, we first describe a simpler tester for additivity over the standard Gaussian distribution $\mathcal{N}(0, I)$ in Section~\ref{sec:additivity-gaussian}.
Then, in Section~\ref{sec:distribution-free-additivity}, we describe how to generalize this algorithm to test additivity over an unknown distribution.

\subsection{Tester for the Standard Gaussian}\label{sec:additivity-gaussian}

Our goal in this section is to design a constant-query tester for the additivity of a measurable function $f\colon \mathbb{R}^n \rightarrow \mathbb{R}$ over the standard Gaussian.

\begin{thm}\label{thm:central-gausssian}
  There exists a one-sided error $\Omega\left(\frac{1}{\varepsilon} \log\frac{1}{\varepsilon}\right)$-query tester for additivity over the standard Gaussian.
\end{thm}

At a high-level, our tester consists of two steps. First, we test whether $f$ satisfies additivity over a set of samples drawn from the distribution. If $f$ passes this tests, then we conclude that there must be an additive function $g \colon\mathbb{R}^n \rightarrow \mathbb{R}$, which is a self-corrected version of $f$. Second, by testing the value of $f$ on a correlated set of points, we are able to get query access to $g$ with high probability, and therefore we can simply estimate the distance between $f$ and $g$. Our tester relies on the fact that it has one-sided error: if $f$ is additive then our test passes with probability $1$. Otherwise, if $f$ is non-additive and the second step passes, then with high probability $f$ and $g$ must be close.

The first step is inspired by the BLR test. Indeed, the evaluation of the function $g$ at a point $p$ is defined as the (weighted) majority value of $f(p-x) + f(x)$ over all $x \sim \mathcal{N}(0,I)$ (where, $f(p-x) + f(x)$ is weighted according to the probability of drawing $x \sim \mathcal{N}(0,I)$). However, there are some significant challenges in generalizing the BLR test to the standard Gaussian, the most obvious of which is that unlike the uniform distribution, every point in the support of the distribution does not have equal probability. In particular, $p-x$ is not distributed as $x \sim \mathcal{N}(0,I)$ for fixed $p \neq 0$. In order to overcome this, we exploit the fact that for additive functions $f$, we have $f(x) = q f(x/q)$ for every rational $q$. This allows us to restrict attention to a small ball $\ball(0,1/r)$ of radius $1/r$ centred at the origin. Then, for $p \in \ball(0,1/r)$, $p-x$ is approximately distributed as $x$ for small enough $1/r$. Thus, we get around the issue of unevenly weighted points by defining $g$ within $\ball(0,1/r)$, and then extrapolating to define $g$ over $\mathbb{R}^n$.

Concretely, we will define $g$ as follows. First, let $r$ be a sufficiently large integer ($r=50$ suffices). For each point $p \in \mathbb{R}^n$ define
\[ k_p := \begin{cases} 1 &\text{ if } \|p \|_2 \leq 1/r, \\ \left \lceil r \cdot \|p\|_2 \right \rceil   &\text{ if } \|p \|_2 > 1/r. \end{cases}\]
Now, define $g\colon\mathbb{R}^n \rightarrow \mathbb{R}$ as
\[ g(p) := k_p \cdot \mathop{\maj}_{\mathcal{N}(0,I)} \left[  f \left( \frac{p}{k_p}-x \right) +f \left (x \right)  \right], \]
where $\maj_{\mathcal{N}(0,I)}$ is the \emph{weighted majority function} where a value $f(p/k_p - x) + f(x)$ is weighted according to its probability mass under $x \sim \mathcal{N}(0,I)$. Observe that either $p \in \ball(0,1/r)$, or $g(p)$ first maps $p$ to a point $p/k_p$ in $\ball(0,1/r)$. The value of $g$ is the most likely value (according to $\mathcal{N}(0,I)$) of  $f(p/k_p -x) + f(x)$. If $f$ is close to additive, then taking this majority should allow us to correct for the errors in $f$.

An equivalent definition of $g$ which will be useful is the following. For $p \in \mathbb{R}^n$ let $P_p$ be the Lebesgue measurable function such that $\int_A P_p(x) dx$ gives the probability (over $\mathcal{N}(0,I)$) that $f(p/k_p-x) + f(x)$ takes value in $A$. Then $g$ is defined as $g(p) := \argmax_x P_p(x)$ if $P_p(x) \geq 1/2$.



\vspace{1em}

\begin{algorithm}[!t]
  \caption{Standard Gaussian Additivity Tester}\label{alg:zero-mean-additivity}
    \Given{Query access to $f\colon \mathbb{R}^n \rightarrow \mathbb{R}$, sampling access to the distribution $\mathcal{N}(0, I)$;}
    \textbf{Reject} if \Call{TestAdditivity}{$f$} returns \textbf{Reject}\;
    \For{$N_{\ref{alg:zero-mean-additivity}} := O(1/\varepsilon)$ times}{
      Sample $p \sim \mathcal{N}(0,I)$\;
      \textbf{Reject} if $f(p) \neq$ \Call{Query-$g$}{$p,f$} or if \Call{Query-$g$}{$p,f$} returns \textbf{Reject}.
    }
    \textbf{Accept}.
\end{algorithm}
\begin{algorithm}[!t]
  \caption{Subroutines}\label{alg:subroutines}
  \Procedure{\emph{\Call{TestAdditivity}{$f$}}}{
    \Given{Query access to $f\colon \mathbb{R}^n \rightarrow \mathbb{R}$, sampling access to the distribution $\mathcal{N}(0, I)$;}
    \For{$N_{\ref{alg:subroutines}} := O(1)$ times}{
      Sample $x, y, z \sim \mathcal{N}(0,I)$\;
      \textbf{Reject} if $f(-x) \neq -f(x)$\;
      \textbf{Reject} if $f(x - y) \neq f(x) - f(y)$\;
      \textbf{Reject} if $f\left(\frac{x-y}{2} \right) \neq f \left(\frac{x - z}{2} \right) + f \left(\frac{z - y}{2} \right)$\;
    }
    \textbf{Accept}.
  }
  \Procedure{\emph{\Call{Query-$g$}{$p, f$}}}{
    \Given{$p \in \mathbb{R}^n$, query access to $f\colon \mathbb{R}^n \rightarrow \mathbb{R}$, sampling access to $\mathcal{N}(0,I)$;}
    $N'_{\ref{alg:subroutines}} := O(\log \frac{1}{\varepsilon})$\;
    Sample $x_{1}, \ldots, x_{N'_{\ref{alg:subroutines}}} \sim \mathcal{N}(0,I)$\;
    \textbf{Reject} if there exists $i,j \in [ N'_{\ref{alg:subroutines}}]$ such that $f(p/k_p - x_i) + f(x_i) \neq f(p/k_p - x_j) + f(x_j)$\;
    \Return $k_p \left( f( p/k_p-x_1) + f(x_1) \right)$.
  }
\end{algorithm}

Our algorithm is given in Algorithm~\ref{alg:zero-mean-additivity}, which uses subroutines given in Algorithm~\ref{alg:subroutines}.
The \Call{Query-$g$}{} subroutine allows us to obtain query access to $g$ with high probability, while the \Call{TestAdditivity}{} subroutine tests the conditions that we require in order to prove that $g$ is additive.

\begin{lem}\label{lem:g-is-additive}
  If \Call{TestAdditivity}{$f$} accepts with probability at least $1/10$, then $g$ is a well-defined, additive function, and furthermore, $\Pr_{x \sim \mathcal{N}(0,I)}[g(x) \neq k_p (f(p/k_p - x) + f(x)) ] < 1/2$.
\end{lem}

We first prove Theorem~\ref{thm:central-gausssian} assuming that Lemma~\ref{lem:g-is-additive} holds.
\begin{proof}[Proof of Theorem~\ref{thm:central-gausssian}]
  First, observe that if $f$ is an additive function then Algorithm~\ref{alg:zero-mean-additivity} always accepts. It is immediate that \Call{TestAdditivity}{$f$} always accepts. To see that it also passes the remaining tests, observe that by additivity,  $k_p \left( f(p/k_p - x) + f(x) \right) = k_p  f(p/k_p ) = f(p)$, where the final inequality holds because $k_p \in \mathbb{Z}$ and by homogeneity over the rationals $f(qx) = qf(x)$ for every $q \in \mathbb{Q}$.

  We now show that if $f$ is $\varepsilon$-far from additive functions, then Algorithm~\ref{alg:zero-mean-additivity} rejects with probability at least $2/3$.
  If \Call{TestAdditivity}{$f$} accepts with probability at most $1/10$, we can reject $f$ with probability at least $1-1/10>2/3$.
  Hence, we assume that \Call{TestAdditivity}{$f$} accepts with probability at least $1/10$.
  Then by Lemma~\ref{lem:g-is-additive}, the function $g$ is additive and hence $f$ is $\varepsilon$-far from $g$.
  Now, we want to bound the probability that Step~3 of Algorithm~\ref{alg:zero-mean-additivity} passes.

  First, we bound the probability that \Call{Query-$g$}{$p, f$} fails to recover the value of $g(p)$. That is, we bound the probability that $f(p/k_p -x_i) + f(x_i) = f(p/k_p -x_j) + f(x_j)$ for all $i,j \in \left[ N'_{\ref{alg:subroutines}} \right]$, but $g(p) \neq k_p \left( f(p/k_p-x_i) + f(x_i) \right)$.
  By Lemma~\ref{lem:g-is-additive}, the probability that we draw $N'_{\ref{alg:subroutines}}$ points which satisfy this is at most $2^{-N'_{\ref{alg:subroutines}}} \leq \varepsilon /2$ by choosing the hidden constant in $N'_{\ref{alg:subroutines}}$ to be large enough. Therefore the probability that we correctly recover $g(p)$ is at least $1-\varepsilon/2$.

  Now that we have established that we can obtain query access to $g$ with high probability, it remains to show that we can test whether $f$ and $g$ are close.
  Indeed, the probability that Step 3 of Algorithm~\ref{alg:zero-mean-additivity} fails to reject is at most
  \begin{align*}
    & {\left( \Pr_{p \sim \mathcal{N}(0,I)} \left[ f(p) = g(p) \vee \text{\Call{Query-$g$}{$p, f$} fails to correctly recover $g(p)$} \right]  \right)}^{N_{\ref{alg:zero-mean-additivity}}} \\
    \leq & {\left( 1- \Pr_{p \sim \mathcal{N}(0,I)} [f(p) \neq g(p)] + \Pr_{p \sim \mathcal{N}(0,I)} \left[ \text{\Call{Query-$g$}{$p, f$} fails to correctly recover $g(p)$} \right]  \right)}^{N_{\ref{alg:zero-mean-additivity}}} \\
    <& {\left(1- \frac{\varepsilon}{2} \right)}^{N_{\ref{alg:zero-mean-additivity}}} < \frac{1}{10},
  \end{align*}
  by choosing the hidden constant in $N_{\ref{alg:zero-mean-additivity}}$ to be large enough.
  Therefore, Algorithm~\ref{alg:zero-mean-additivity} rejects with probability at least $1-1/10 > 2/3$.
\end{proof}

It remains to prove Lemma~\ref{lem:g-is-additive} showing that if  Algorithm~\ref{alg:zero-mean-additivity} succeeds, then $g$ is an additive function with high probability.

\subsubsection{Additivity of the Function $g$}
First, we record the basic, but useful observation that if the \Call{TestAdditivity}{} subroutine passes then each of its tests hold with high probability over $\mathcal{N}(0,I)$.

\begin{lem}\label{lem:f-additive-whp}
  If \Call{TestAdditivity}{$f$} accepts with probability at least $1/10$, then
  \begin{align} \Pr_{x,y \sim \mathcal{N}(0,I)} \left [f(x - y) = f(x) - f(y) \right] \geq \frac{99}{100}, \label{eq:lem3-1} \\
  \Pr_{x \sim \mathcal{N}(0,I)}[f(-x) = -f(x)] \geq \frac{99}{100}, \label{eq:lem3-2} \\
  \Pr_{x,y,z \sim \mathcal{N}(0,I)} \left [f\left(\frac{x-y}{2} \right) = f \left(\frac{x - z}{2} \right) + f \left(\frac{z - y}{2} \right) \right] \geq \frac{99}{100}.  \label{eq:lem3-3}
  \end{align}
\end{lem}
\begin{proof}
  Suppose for contradiction that at least one of~\eqref{eq:lem3-1},~\eqref{eq:lem3-2}, and~\eqref{eq:lem3-3} does not hold.
  We here assume that~\eqref{eq:lem3-1} does not hold as other cases are similar.

  We accept only when all the sampled pairs $(x,y)$ satisfy $f(x+y) = f(x) + f(y)$.
  By setting the hidden constant in $N_{\ref{alg:subroutines}}$ to be large enough, this happens with probability at most
  \[ {\left(1- \Pr_{x,y \sim \mathcal{N}(0,I)} [f(x+y) \neq f(x) + f(y)] \right)}^{N_{\ref{alg:subroutines}}} < {\left( \frac{99}{100}\right)}^{N_{\ref{alg:subroutines}}} < \frac{1}{10}, \]
  which is a contradiction.
\end{proof}

In order to argue that $g$ is additive, we will first argue that $g$ is additive on points within the tiny ball $
\ball(0,1/r)$. To do so, we will crucially use the fact that $p-x$ is distributed approximately as $x \sim \mathcal{N}(0,I)$ if $\|p\|_2$ is small.
By Lemma~\ref{lem:bound-TV-different-means} we have a bound on the total variation distance between $x$ and $x+p$.
\begin{claim}\label{clm:tvd-bound-on-ball}
  Let $p \in \mathbb{R}^n$ satisfying $\|p\|_2 \leq k/r$ for some $k \in \mathbb{Z}^{>0}$.
  Then $\dtv(\mathcal{N}(0,I), \mathcal{N}(p, I)) \leq k /100$.
\end{claim}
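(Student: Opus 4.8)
The plan is to apply Lemma~\ref{lem:bound-TV-different-means} directly with the two Gaussians $\mathcal{N}(0,I)$ and $\mathcal{N}(p,I)$, which share the covariance matrix $\Sigma = I$. First I would note that $\Sigma^{-1} = I$, so $\|\Sigma^{-1}\|_2 = 1$, and the mean difference is $\mu_1 - \mu_2 = p$ with $\|p\|_2 \leq k/r$. By Lemma~\ref{lem:bound-TV-different-means}, to conclude $\dtv(\mathcal{N}(0,I), \mathcal{N}(p,I)) \leq \phi$ it suffices to check that $\|p\|_2 \leq 2\phi/\sqrt{\|\Sigma^{-1}\|_2} = 2\phi$. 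Taking $\phi = k/100$, the required inequality becomes $\|p\|_2 \leq k/50$, which holds since $\|p\|_2 \leq k/r$ and $r = 50$ (or, more conservatively, $r \geq 50$). This immediately gives the claimed bound.

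There is essentially no obstacle here: the claim is a routine specialization of the already-proved Lemma~\ref{lem:bound-TV-different-means} to the identity covariance, and the only thing to verify is the arithmetic $r = 50 \geq 50$ matching the constant $100$ in the statement. The one point worth stating carefully is that the lemma is invoked with $\phi = k/100$ rather than trying to optimize the constant, since the downstream usage (establishing that $p/k_p - x$ is distributed approximately as $x$ within $\ball(0,1/r)$, and summing such bounds over the integer $k_p$) only needs a bound linear in $k$ with a small enough leading constant.

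\begin{proof}
  Apply Lemma~\ref{lem:bound-TV-different-means} with $\mu_1 = 0$, $\mu_2 = -p$, and $\Sigma = I$. Since $\|\Sigma^{-1}\|_2 = \|I\|_2 = 1$, the hypothesis of that lemma with $\phi = k/100$ requires $\|\mu_1 - \mu_2\|_2 = \|p\|_2 \leq 2\phi/\sqrt{1} = k/50$. By assumption $\|p\|_2 \leq k/r$, and since $r \geq 50$ we have $\|p\|_2 \leq k/50$, so the hypothesis is satisfied. Hence $\dtv(\mathcal{N}(0,I), \mathcal{N}(p,I)) \leq k/100$, as claimed.
\end{proof}
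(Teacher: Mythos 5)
Your proof is correct and is essentially identical to the paper's: both invoke Lemma~\ref{lem:bound-TV-different-means} with $\Sigma = I$ and $\phi = k/100$, reducing the claim to the arithmetic check $\|p\|_2 \leq k/r = k/50 = 2\phi$. (Your $\mu_2 = -p$ should read $\mu_2 = p$, but this is immaterial since only $\|\mu_1-\mu_2\|_2$ enters the hypothesis.)
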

\begin{proof}
	By Lemma~\ref{lem:bound-TV-different-means}, for $\dtv(\mathcal{N}(0,I), \mathcal{N}(p, I)) \leq k /100$ it is enough that $p$ satisfies $\|0-p\|_2 \leq 2k/(100\sqrt{\|I\|_2})$. Because $\| p\|_2 \leq k/r = 2k/100 = 2k/(100\sqrt{ \|I\|_2})$.
\end{proof}

After arguing that $g$ is additive in $\ball(0,1/r)$, it will follow that $g$ is additive elsewhere because $g$ is defined by extrapolating the value of $g$ within this ball. Therefore, we will focus on proving the additivity of $g$ within $\ball(0,1/r)$.
\begin{lem}\label{lem:g-additive-in-ball}
  Suppose that~\eqref{eq:lem3-1} --~\eqref{eq:lem3-3} of Lemma~\ref{lem:f-additive-whp} hold. For every $p,q \in \mathbb{R}^n$ with $\|p\|_2, \|q\|_2, \|p+q\|_2 \leq 1/r$ it holds that $g(p+q) =g(p)+g(q)$.
\end{lem}

The proof of this lemma will crucially rely on the following two lemmas, which say that the conclusions of Lemma~\ref{lem:f-additive-whp} hold with high probability even when one of the points are fixed to a point $\ball(0,1/r)$. A consequence of this is that $g$ is well-defined.
\begin{lem}\label{lem:g-is-well-defined}
  Suppose that~\eqref{eq:lem3-1} --~\eqref{eq:lem3-3} of Lemma~\ref{lem:f-additive-whp} hold, then $g$ is well-defined, and for every $p \in \mathbb{R}^n$ with $\|p\|_2 \leq 1/r$,
  \[ \Pr_{x \sim \mathcal{N}(0,I)}[g(p) = f(p-x) +f(x)] \geq \frac{9}{10}. \]
\end{lem}
\begin{proof}
  Fix a point $p \in \mathbb{R}^n$ with $\|p\|_2 \leq 1/r$. We will bound the following probability.
  \[ A:= \Pr_{x,y \sim \mathcal{N}(0,I)} [f(p-x) + f(x) = f(p -y) +f(y)].\]
  Observe that
  \begin{align*}
    A=& \Pr_{x,y \sim \mathcal{N}(0,I)}[f(x) - f(y) \neq f(p - y) - f(p-x) ]\\
    \leq & \Pr_{x,y \sim \mathcal{N}(0,I)}[f(x) - f(y) \neq f(x-y) ] + \Pr_{x,y \sim \mathcal{N}(0,I)}[f(x-y) \neq f(p-y) - f(p-x)] \\
    < & \frac{1}{100} + \Pr_{x,y \sim \mathcal{N}(0,I)}[f(x-y) \neq f(p-y) - f(p-x)] \tag{By Lemma~\ref{lem:f-additive-whp}}
  \end{align*}
  It remains to bound the second term. Intuitively, because $x-p, y-p \sim \mathcal{N}(-p, I)$ and  $p \approx 0$, the random variables $p-x$ and $p-y$ should be distributed similarly to $x$ and $y$. Indeed,
  \begin{align*}
    &\Pr_{x,y \sim \mathcal{N}(0, I)}[f(x-y) \neq f(p-y) - f(p-x)] \\ =& \Pr_{x,y \sim \mathcal{N}(0, I)}[f(x-p + p-y) \neq f(p-y) - f(p-x)] \\
    =&\Pr_{x,y \sim \mathcal{N}(-p, I)}[f(x-y) \neq f(-y) - f(-x)] \\
    \leq & \Pr_{x,y \sim \mathcal{N}(0, I)}[f(x-y) \neq f(-y) - f(-x)] +2 \dtv \Big(\mathcal{N}(0, I), \mathcal{N}(-p, I) \Big) \\
    \leq & \Pr_{x,y \sim \mathcal{N}(0, I)}[f(x-y) \neq f(x) - f(y)] + \frac{2}{100} + 2 \Pr_{x \sim \mathcal{N}(0, I)}[ f(-x) \neq f(x)] \tag{Claim~\ref{clm:tvd-bound-on-ball}} \\
    \leq & \frac{3}{100} +  \frac{2}{100} = \frac{5}{100}. \tag{By~\eqref{eq:lem3-1} and~\eqref{eq:lem3-2} in Lemma~\ref{lem:f-additive-whp}}
  \end{align*}
  Plugging this into our previous bound on $A$, we can conclude that
  \[ A \geq 1- \left(\frac{1}{100} + \frac{5}{100} \right) = 1 - \frac{6}{100} > \frac{9}{10}.  \]

  Next, we bound $A$ above in terms of the probability that $g(p) \neq f(p-x) + f(x)$. Define $P_p \colon \mathbb{R}^n \rightarrow \mathbb{R}^+$ to be the bounded Lebesgue-measurable function such that $\int_B P_p(x) dx$ is the probability that $f(p -x) + f(x)$ takes value in the (measurable) set $B$. By H{\"o}lder's inequality with $p=1,q= \infty$ we have
  \[ A = \int_{\mathbb{R}} P_p^2(x) dx \leq \|P_p\|_\infty \int_{\mathbb{R}} P_p(x) dx = \|P_p\|_\infty, \]
  where the last equality follows because $P_p$ is a density and $\int_{\mathbb{R}} P_p(x) dx = 1$ holds.
  Therefore,
  \[ \frac{9}{10} \leq A  \leq  \| P_p \|_{\infty}. \]
  Because $\argmax_x P_p(x) \geq 9/10 > 1/2$, we have $g(p) = \argmax_x P_p(x)$ and
  hence $\Pr_{x \sim \mathcal{N}(0, I)} [g(p) = f(p -x) +f(x)] \geq 9/10$.
\end{proof}

The following lemma is essentially condition~\eqref{eq:lem3-3} of Lemma~\ref{lem:f-additive-whp} with two fixed points.

\begin{lem}\label{lem:whp-g-additive-two-points}
  Suppose that~\eqref{eq:lem3-1} --~\eqref{eq:lem3-3} of Lemma~\ref{lem:f-additive-whp} hold then, for every $p,q \in \mathbb{R}^n$ with $\|p\|_2, \|q\|_2, \|p+q\| \leq 1/r$,
  \[ \Pr_{x,y,z \sim \mathcal{N}(0, I)} \Big[ g(p + q) \neq f \Big(p - \frac{x-z}{2} \Big) + f \Big( q - \frac{z-y}{2} \Big) + f \Big( \frac{x-y}{2} \Big) \Big] \leq  \frac{2}{10}. \]
\end{lem}

\begin{proof}
  Fix a pair of points $p,q \in \mathbb{R}^n$ with $\|p\|_2, \|q\|_2 \leq 1/r$. We can bound the probability
  \begin{align*}
    &\Pr_{x,y,z \sim \mathcal{N}(0, I)} \Big[ g(p + q) \neq f \Big(p - \frac{x-z}{2} \Big) + f \Big( q - \frac{z-y}{2} \Big) + f \Big( \frac{x-y}{2} \Big) \Big] \\
    \leq &\Pr_{x,y,z \sim \mathcal{N}(0, I)} \Big[ g(p+q) \neq f \Big( p+q - \frac{x-y}{2} \Big) + f \Big( \frac{x-y}{2} \Big) \Big] \\
    &+ \Pr_{x,y,z \sim \mathcal{N}(0, I)} \Big[ f \Big( p + q - \frac{x-y}{2} \Big) \neq f \Big( p - \frac{x-z}{2} \Big) + f \Big( q -\frac{z-y}{2} \Big) \Big]
  \end{align*}
  To bound the first term, observe that if $x,y \sim \mathcal{N}(0, I)$, then the random variable $(x-y)/2$ is also distributed according to $\mathcal{N}(0, I)$. Furthermore, because $\|p+q \|_2 \leq 1/r$,  we can apply Lemma~\ref{lem:g-is-well-defined} and conclude that
  \begin{align*}
    \Pr_{x,y,z \sim \mathcal{N}(0, I)} \Big[ g(p + q) \neq f \Big(p + q -  \frac{x-y}{2} \Big) + f \Big(\frac{x-y}{2} \Big) \Big] \leq \frac{1}{10}.
  \end{align*}
  To bound the second term, observe that
  \begin{align*}
    &\Pr_{x,y,z \sim \mathcal{N}(0, I)} \Big[ f \Big( p + q - \frac{x-y}{2} \Big) \neq \Big(p- \frac{x-z}{2} \Big) + \Big(q- \frac{z-y}{2} \Big) \Big] \\
    =&\Pr_{x,y,z \sim \mathcal{N}(0, I)} \left[ f \left(\frac{(2q +y)-(x-2p)}{2} \right) \neq \left( \frac{(2q +y)-z}{2} \right) + \left( \frac{z-(x-2p)}{2} \right) \right] \\
    =&\Pr_{\substack{x \sim \mathcal{N}(-2p,I) \\ y \sim \mathcal{N}(2q,I) \\ z \sim \mathcal{N}(0,1)}}\Big[ f \Big(\frac{y-x}{2} \Big) \neq \Big( \frac{y-z}{2} \Big) + \Big( \frac{z-x}{2} \Big) \Big] \\
    \leq  &\Pr_{x,y,z \sim \mathcal{N}(0, I)}\Big[ f \Big(\frac{x-y}{2} \Big) \neq \Big( \frac{x-z}{2} \Big) + \Big( \frac{z-y}{2} \Big) \Big]  + \dtv \Big( \mathcal{N}(0, I), \mathcal{N}(-2p, I )\Big) \\&+  \dtv \Big( \mathcal{N}(0,I), \mathcal{N}(2q, I )\Big) \\
    \leq & \frac{1}{100} + \frac{2}{100} + \frac{2}{100} = \frac{5}{100}. \tag{By Lemma~\ref{lem:f-additive-whp} and Claim~\ref{clm:tvd-bound-on-ball}}
  \end{align*}
  Combining both of these bounds, we have
  $\Pr_{x,y,z \sim \mathcal{D}} [ g(p + q) \neq f (p - \frac{x-z}{2} ) + f ( q - \frac{z-y}{2} ) + f (\frac{x-y}{2} ) ]
  \leq 1/10+5/100 \leq 2/10$.
\end{proof}

The additivity of $g$ within $\ball(0,1/r)$ is an immediate consequence of these two lemmas.

\begin{proof}[Proof of Lemma~\ref{lem:g-additive-in-ball}]
  Let $p,q \in \mathbb{R}^n$ be any pair of points satisfying $\|p\|_2, \|q\|_2, \|p+q\|_2 \leq 1/r$. Our aim is to show that $g(p + q) = g(p) + g(q)$. By a union bound over Lemmas~\ref{lem:g-is-well-defined} and~\ref{lem:whp-g-additive-two-points}, the probability that $x,y,z \sim \mathcal{N}(0,I)$ simultaneously satisfy
  \begin{enumerate}
    \item $g(p + q) = f(p - \frac{x-z}{2}) + f( q - \frac{z-y}{2} ) + f(\frac{x-y}{2})$,
    \item $g(p) = f(p - \frac{x-z}{2} ) + f(\frac{x-z}{2})$,
    \item $g(q) = f(q - \frac{z-y}{2}) + f( \frac{z-y}{2})$,
    \item $f(\frac{x-y}{2}) = f(\frac{x-z}{2}) - f ( \frac{z-y}{2})$
  \end{enumerate}
  is at least $1- (2/10+ 2\cdot 1/10  + 1/10)  > 0$. Here we are using the fact that $((x-y)/2)$ is distributed as $\mathcal{N}(0,I)$. Fixing such a triple $(x,y,z)$, we conclude that
  \begin{align*}
    g ( p + q) &= f \Big(p - \frac{x-z}{2} \Big) + f \Big( q - \frac{z-y}{2} \Big) + f \Big (\frac{x-y}{2} \Big) \\
    &=g(p) + g(q) + f \Big (\frac{x-y}{2} \Big) - f \Big(\frac{x-z}{2} \Big) -  f \Big( \frac{z-y}{2} \Big) \\
    &=g(p) + g(q).
  \end{align*}
  Therefore $g$ is additive within $\ball(0,1/r)$.
\end{proof}

Finally, we argue that $g$ is additive everywhere. Intuitively this should be true because the values of $g$ on points outside of $\ball (0,1/r)$ are defined by extrapolating the values of $g$ on points within $\ball (0,1/r)$, where we know $g$ is additive. For the proof, it will be useful to record the following fact.
 \begin{fact}\label{fact:g-homogeneous-over-ball}
 	 Provided that~\eqref{eq:lem3-1} --~\eqref{eq:lem3-3} of Lemma~\ref{lem:f-additive-whp} hold then, for  every $p \in \mathbb{R}^n$ with $\| p \|_2 \leq 1/r$ and $c \in \mathbb{Z}^{> 0}$, we have  $g(p) = c g(p/c)$.
 \end{fact}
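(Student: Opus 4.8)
\textbf{Proof proposal for Fact~\ref{fact:g-homogeneous-over-ball}.}
The plan is to deduce this directly from the additivity of $g$ inside $\ball(0,1/r)$, which is Lemma~\ref{lem:g-additive-in-ball} (available because we assume~\eqref{eq:lem3-1}--\eqref{eq:lem3-3} of Lemma~\ref{lem:f-additive-whp}). The case $c=1$ is vacuous, so fix $c \in \mathbb{Z}^{>0}$ with $c \geq 2$ and a point $p \in \mathbb{R}^n$ with $\|p\|_2 \leq 1/r$, and set $p' := p/c$. The first observation I would record is that every ``partial sum'' $k p'$ stays inside the small ball: for each integer $k$ with $1 \leq k \leq c$ we have $\|k p'\|_2 = (k/c)\|p\|_2 \leq \|p\|_2 \leq 1/r$. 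In particular $p'$, $(k-1)p'$, and $k p'$ all lie in $\ball(0,1/r)$ whenever $2 \leq k \leq c$.

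With this in hand, I would prove by induction on $k$ that $g(k p') = k\, g(p')$ for all $1 \leq k \leq c$. The base case $k=1$ is immediate. For the inductive step, since $(k-1)p'$, $p'$, and their sum $k p'$ each have norm at most $1/r$, Lemma~\ref{lem:g-additive-in-ball} applies to the pair $\big((k-1)p',\, p'\big)$ and gives $g(k p') = g((k-1)p') + g(p') = (k-1)g(p') + g(p') = k\, g(p')$, where the middle equality uses the induction hypothesis. Setting $k=c$ yields $g(p) = g(c p') = c\, g(p') = c\, g(p/c)$, which is the claim.

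I do not expect a genuine obstacle here; the only point requiring care is that Lemma~\ref{lem:g-additive-in-ball} needs all three of $\|p\|_2$, $\|q\|_2$, $\|p+q\|_2$ to be at most $1/r$, and the elementary bound $\|k p'\|_2 \leq \|p\|_2$ for $k \leq c$ is precisely what guarantees this at each step of the induction. So the ``hard part'' is merely bookkeeping the ball-membership constraints; all the real content sits in Lemma~\ref{lem:g-additive-in-ball}.
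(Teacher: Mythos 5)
Your proof is correct and is essentially the same as the paper's: the paper writes $g(p) = g\left(\sum_{i=1}^c p/c\right) = \sum_{i=1}^c g(p/c)$ by repeated application of Lemma~\ref{lem:g-additive-in-ball}, noting that the partial sums $kp/c$ stay in $\ball(0,1/r)$, which is exactly your induction written as a chained equality. Your version just makes the induction and the ball-membership bookkeeping explicit.
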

 \begin{proof} Observe that
  $g(p) = g((c/c)p) = g \left( \sum_{i=1}^c p/c \right) = \sum_{i=1}^c g(p/c) =  c \cdot g(p/c)$, where the third equality follows by Lemma~\ref{lem:g-additive-in-ball}, noting that $\|kp/c\|_2 \leq 1/r$ for every $k \in [c-1]$
 \end{proof}

\begin{proof}
  [Proof of Lemma~\ref{lem:g-is-additive}]
  Fix a pair of points $p,q \in \mathbb{R}^n$, we will argue that $g(p+q) = g(p) + g(q)$. Recall that $g(p):= k_p g(p/k_p)$, $g(q):= k_q g(q/k_q)$, and $g(p+q):= k_{p+q} g((p+q)/k_{p+q})$. Then,
  \[ g(p)+g(q) = k_p \cdot g\left( \frac{p}{k_p} \right) + k_q \cdot g\left( \frac{p}{k_q} \right) = k_p k_q k_{p+q} \cdot g\left( \frac{p}{k_p k_q k_{p+q}} \right) + k_p k_q k_{p+q} \cdot g\left( \frac{p}{k_p k_q k_{p+q}} \right),  \]
  where the second equality follows by Fact~\ref{fact:g-homogeneous-over-ball}, noting that $k_p, k_q, k_{p+q} \in \mathbb{Z}^{> 0}$ and so $p/k_p, q/k_q \in \ball(0,1/r)$. Furthermore, because $p/(k_p k_q k_{p+q}), q/(k_p k_q k_{p+q}),(p+q)/ (k_p k_q k_{p+q}) \in \ball(0,1/r)$, we can apply Lemma~\ref{lem:g-additive-in-ball} to obtain
  \begin{align*} k_p k_q k_{p+q} \left( g\left( \frac{p}{k_p k_q k_{p+q}} \right) + g\left( \frac{p}{k_p k_q k_{p+q}} \right) \right) &= k_p k_q k_{p+q} \cdot g\left( \frac{p+q}{k_p k_q k_{p+q}} \right) \\
  &= k_{p+q} \cdot g\left( \frac{p+q}{k_{p+q}} \right) \\
  &=g(p+q),
  \end{align*}
  where the second equality follows by Fact~\ref{fact:g-homogeneous-over-ball}, noting that $k_p k_q \in \mathbb{Z}^{>0}$ and $(p+q)/k_{p+q} \in \ball(0,1/r)$.

  Finally, by Lemma~\ref{lem:g-is-well-defined}, $g$ is well-defined within $\ball(0,1/r)$. Because $g$ is defined by extrapolating from its value within this ball, it is well-defined everywhere.
\end{proof}

\begin{remark}
	This tester (and the same proof) will in fact work over any Gaussian $\mathcal{N}(0,\Sigma)$ for arbitrary covariance matrix $\Sigma \in \mathbb{R}^{n \times n}$ by setting the value of $r$ to be $50\sqrt{ \|\Sigma^{-1} \|_2}$.
\end{remark}

\subsection{Distribution-Free Tester}\label{sec:distribution-free-additivity}

In this section, we prove Theorem~\ref{thm:intro-additivity} by adapting our tester for additivity over the standard Gaussian (Algorithm~\ref{alg:zero-mean-additivity}) to a distribution-free tester.

Assuming that we are able to draw samples from the standard Gaussian (or in fact any Gaussian), the modification to Algorithm~\ref{alg:zero-mean-additivity} is straight forward. Indeed, we will only have to modify Algorithm~\ref{alg:zero-mean-additivity}, the two subroutines will remain the same. Let $\mathcal{D}$ be our unknown distribution by which we will measure the distance of $f$ to an additive function. The high-level idea is to first run the \Call{TestAdditivity}{} subroutine over the standard Gaussian. If this passes, then we know that with high probability $g$ is additive. We can obtain query access to $g(p)$ (with high probability) as before by sampling points $x \sim \mathcal{N}(0,I)$ and checking that the values of $k_p(f(p/k_p -x) + f(x))$ agree for all of the $x$ that we sample. To test whether $f$ and $g$ are $\varepsilon$-far according to $\mathcal{D}$ it suffices to sample points $p \sim \mathcal{D}$ and check whether $f(p)$ and $g(p)$ agree.

Our algorithm is given in Algorithm~\ref{alg:distribution-free-additivity}.
We stress that both subroutines \Call{TestAdditivity}{} and \Call{Query-$g$}{$p_i$} are being performed over $\mathcal{N}(0,I)$, i.e., they do not use $\mathcal{D}$.

\begin{algorithm}[t!]
  \caption{Distribution-Free Additivity Tester}\label{alg:distribution-free-additivity}
  \Given{query access to $f: \mathbb{R}^n \rightarrow \mathbb{R}$, sampling access to an unknown distribution $\mathcal{D}$, and sampling access to $\mathcal{N}(0, I)$;}
  \textbf{Reject} if \Call{TestAdditivity}{$f$} returns \textbf{Reject}\;
  \For{$N_{\ref{alg:distribution-free-additivity}} := O(1/\varepsilon)$ times}{
    Sample $p \sim \mathcal{D}$\;
    \textbf{Reject} if $f(p) \neq$ \Call{Query-$g$}{$p,f$} or if \Call{Query-$g$}{$p,f$} returns \textbf{Reject}.
  }
  \textbf{Accept}.
\end{algorithm}


\begin{proof}[Proof of Theorem~\ref{thm:intro-additivity}]
  The proof is nearly identical to the proof of Theorem~\ref{thm:central-gausssian}. Again, observe that if $f$ is an additive function then Algorithm~\ref{alg:distribution-free-additivity} always accepts.

  It remain to show that if $f$ is $\varepsilon$-far from additive functions, then Algorithm~\ref{alg:distribution-free-additivity} rejects with probability at least $2/3$.
  If \Call{TestAdditivity}{$f$} accepts with probability at most $1/10$, we can reject $f$ with probability at least $1-1/10 > 2/3$.
  Hence, we assume that \Call{TestAdditivity}{$f$} accepts with probability at least $1/10$.
  By Lemma~\ref{lem:g-is-additive}, the function $g$ is additive and hence $f$ is $\varepsilon$-far from $g$.
  Note that the probability that \Call{Query-$g$}{$p,f$} fails to correctly recover $g(p)$ is at most $\varepsilon/2$ by the same argument as before.
  It remains to bound the probability that Step~3 fails to reject, which is
  \begin{align*}
     &{\left( \Pr_{p \sim \mathcal{N}(0,I)} \left[ f(p) = g(p) \vee \text{\Call{Query-$g$}{$p$} fails to correctly recover $g(p)$} \right]  \right)}^{N_{\ref{alg:distribution-free-additivity}}}
    < {\left(1- \frac{\varepsilon}{2}\right)}^{N_{\ref{alg:distribution-free-additivity}}} < \frac{1}{10},
  \end{align*}
  by choosing the hidden constant in $N_{\ref{alg:distribution-free-additivity}}$ to be large enough, by the same argument as before. Therefore, Algorithm~\ref{alg:distribution-free-additivity} rejects with probability at least $1-1/10> 2/3$.
\end{proof}

\section{Testing Linearity of Continuous Functions}\label{sec:linearity}

In this section, we prove Theorem~\ref{thm:intro-linearity} by adapting the tester from the previous section (Algorithm~\ref{alg:distribution-free-additivity}) to test whether $f$ is linear, given that $f$ is a continuous function.


We would like to argue that if  $f$ is continuous and Algorithm~\ref{alg:distribution-free-additivity} passes then $g$ is in fact a linear function with high probability. However, in order to exploit continuity, we need $f$ to satisfy $f(-x) = -f(x)$ for every $x \in \mathbb{R}^n$. First, we will show how to argue that $g$ is linear assuming that $f(-x) = -f(x)$. After that, we will handle the case when this property does not hold.

 \begin{lem}\label{lem:continuous-f-linear-g}
  If $f\colon\mathbb{R}^n \rightarrow \mathbb{R}$ is a continuous function satisfying $f(-x) = -f(x)$ and the assumptions of Lemma~\ref{lem:g-is-additive} hold, then
  the function $g$ is linear.
\end{lem}
The proof will rely on the following claim which was originally proved by Darboux in 1875.
\begin{claim}\label{clm:continuous-implies-linear}
    Any additive function $f\colon\mathbb{R}^n \rightarrow \mathbb{R}$ which is continuous at a point $x_0 \in \mathbb{R}^n$ is a linear function.
  \end{claim}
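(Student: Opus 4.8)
The plan is to follow the classical argument of Darboux in three steps: first deduce $\mathbb{Q}$-homogeneity purely from additivity, then bootstrap continuity at the single point $x_0$ to continuity on all of $\mathbb{R}^n$, and finally upgrade $\mathbb{Q}$-homogeneity to $\mathbb{R}$-homogeneity using continuity together with the density of $\mathbb{Q}$ in $\mathbb{R}$.

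\textbf{Step 1 ($\mathbb{Q}$-homogeneity).} From $f(x+y)=f(x)+f(y)$ one gets $f(0)=0$, $f(-x)=-f(x)$, and $f(mx)=mf(x)$ for every $m\in\mathbb{Z}$ by induction. For $m\in\mathbb{Z}^{>0}$, applying this with $x$ replaced by $x/m$ gives $m\,f(x/m)=f(x)$, so $f(x/m)=f(x)/m$. Combining these, $f(qx)=q\,f(x)$ for every $q\in\mathbb{Q}$ and $x\in\mathbb{R}^n$. (This fact is already used in the paper for additive $f$.)

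\textbf{Step 2 (continuity everywhere).} The increment of $f$ is translation invariant: $f(a+h)-f(a)=f(h)$ for all $a,h\in\mathbb{R}^n$. Hence, to prove that $f$ is continuous at an arbitrary point $a$ it suffices to prove that $f$ is continuous at $0$, i.e.\ that $f(h)\to 0$ as $h\to 0$. But if $h\to 0$ then $x_0+h\to x_0$, so by continuity of $f$ at $x_0$ and additivity, $f(x_0)+f(h)=f(x_0+h)\to f(x_0)$, whence $f(h)\to 0=f(0)$. Thus $f$ is continuous on all of $\mathbb{R}^n$.

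\textbf{Step 3 ($\mathbb{R}$-homogeneity) and conclusion.} Fix $x\in\mathbb{R}^n$ and $c\in\mathbb{R}$, and choose rationals $q_k\to c$. Then $q_k x\to cx$, so by the continuity established in Step 2 and the $\mathbb{Q}$-homogeneity of Step 1, $f(cx)=\lim_k f(q_k x)=\lim_k q_k f(x)=c\,f(x)$. Together with additivity this shows $f$ is linear in the sense of the paper; in particular $f(x)=\sum_{i=1}^n f(e_i)x_i$, although the explicit form is not needed. The only non-obvious point is the translation trick in Step 2, which exploits the rigidity of additive functions to propagate the purely local regularity hypothesis to a global one; the remaining steps are routine.
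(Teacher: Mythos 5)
Your proof is correct and follows essentially the same route as the paper: $\mathbb{Q}$-homogeneity from additivity, promotion of continuity at one point to continuity everywhere, and then $\mathbb{R}$-homogeneity by approximating an arbitrary real scalar with rationals. The only difference is that you supply the short translation argument for Step 2, which the paper simply cites as well known, and your Step 3 limit argument is a cleaner phrasing of the paper's $\xi_i,\zeta_i$ sequence construction.
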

  \begin{proof}
  		First, it is well-known that any additive function which is continuous at a point is continuous everywhere (see e.g.,~\cite{bartle2011introduction}). Next, we argue that the continuity of $f$ implies that $f(rx) = rf(x)$ for every $r \in \mathbb{R}$ and $x \in \mathbb{R}^n$. Because $f$ is additive, this homogeneity holds for every $r \in \mathbb{Q}$, so it suffices to assume that $r$ is irrational.

      Fix $x \in \mathbb{R}^n$ and irrational $r$.
  		Then for any $\zeta > 0$, we can always find $\tilde{r} \in \mathbb{Q}$ such that $|\tilde{r} - r| < \zeta$ and $\|\tilde{r} x - rx \|_2 < \zeta$. Now, by the continuity of $f$, for any $\xi > 0$ there exists $\zeta > 0$ such that whenever  $\| \tilde{r}x -rx\|_2 < \zeta$, we have $|f(\tilde{r} x ) -f(rx)| < \xi$.
  		Now, take a sequence ${\{ \xi_i\}}_i$ with $\xi_i \rightarrow 0$ and consider the corresponding sequence ${\{\zeta_i\}}_i$ with $\zeta_i \rightarrow 0$. Let ${\{ \tilde{r}_i \}}_i$ with $r_i \in \mathbb{Q}$ be the sequence of approximations such that $|\tilde{r}_i -r| \leq \zeta_i$ and $\|\tilde{r}_i x -rx\|_2 \leq \zeta_i$. Then,
  		\[ |f(rx) - rf(x)| \leq |f(rx) - f(\tilde{r}_i x)| + |f(\tilde{r}_i x) - rf(x)| \leq \xi_i + |\tilde{r}_i f(x) - rf(x)| \leq \xi_i + \zeta_i |f(x)|. \]
  		Because $\zeta_i, \xi_i \rightarrow 0$, $|f(rx) - rf(x)| \rightarrow 0$ and so $f(rx) = rf(x)$.
  \end{proof}
  With this claim in hand, we are ready to prove Lemma~\ref{lem:continuous-f-linear-g}.

\begin{proof}[Proof of Lemma~\ref{lem:continuous-f-linear-g}]
	Let $f$ be a continuous function satisfying $f(-x) = -f(x)$. By Lemma~\ref{lem:g-is-additive}, the function $g$ is additive. Conditioned on this event, we will show that the continuity of $f$ implies that $g$ is linear as well. To do so, we will argue that $g$ is continuous at the origin and then appeal to Claim~\ref{clm:continuous-implies-linear} to conclude that $g$ is linear.

	Let $B$ be a ball of mass $1/2$ (with respect to $\mathcal{N}(0,I)$) centred at the origin.
	Let ${\{p_i\}}_i$ be any sequence of points with $p_i \in B$, $\|p_i \|_2 \leq 1/r$ and $p_i \rightarrow 0$. Now, let ${\{x_i\}}_i$ be a sequence of points such that $g(p_i) = f(p_i - x_i) +f(x_i)$ and $x_i \in B$. Such a sequence exists because, by Lemma~\ref{lem:g-is-additive} $\Pr_{x \sim \mathcal{N}(0,I)}[g(x) = f(p_i-x) + f(x)] \geq 1/2$ and so for every $p_i$ there must exist such an $x_i$ in $B$.

	Let $S$ be the ball centred at the origin with twice the radius of $B$.
  As $S$ is compact and $f$ is continuous, $f$ is uniformly continuous on $S$.
	Thus for every $\xi > 0$, there exists $\zeta >0$ such that $|f(p_i - x_i) - f(-x_i)| = |f(p_i - x_i) + f(x_i) | < \xi$ whenever $\|(p_i - x_i) + x_i \|_2 < \zeta$.
  Now, take a sequence ${\{ \xi_i \}}_i$ with $\xi_i \rightarrow 0$ and consider the corresponding sequence ${\{ \zeta_i \}}_i$.
  As $p_i \rightarrow 0$, for every $i$, there exists $j$ such that $\|(p_j -x_j) + x_j \|_2 < \zeta_i$ which in particular implies that $|g(p_j) | = |f(p_j - x_j) + f(x_j) | < \xi_i$. Thus, $g(p_i) \rightarrow 0$, and $g$ is continuous at the origin. By Claim~\ref{clm:continuous-implies-linear}, we can conclude that $g$ is a linear function.
\end{proof}

 Now we consider the case when $f(-x) \neq -f(x)$ for some $x$.
 Luckily, in this case we can \emph{force} $f$ to satisfy $f(-x) = -f(x)$. To do so, we test whether $f$ is $\varepsilon /2$-far from satisfying this property. If it is, then we reject $f$, otherwise, we can replace $f$ with a function $f'$ guaranteed to satisfy this property, by defining
 \begin{align*}
	 f'(x) := \frac{f(x)- f(-x)}{2}.
 \end{align*}
We then continue to work over $f'$ rather than $f$.
Our modified algorithm is given in Algorithm~\ref{alg:distribution-free-linearity}, which uses Algorithm~\ref{alg:forcenegativity-subroutine} as a subroutine.

\begin{algorithm}[t!]
  \caption{Distribution-Free Linearity Tester}\label{alg:distribution-free-linearity}
  \Given{query access to a continuous $f\colon \mathbb{R}^n \rightarrow \mathbb{R}$, sampling access to an unknown distribution $\mathcal{D}$, and sampling access to $\mathcal{N}(0, I)$;}
  \textbf{Reject} if \Call{ForceNegativity}{$f, \mathcal{D}$} returns \textbf{Reject}\;
  Let $f'$ be the returned function\;
  \textbf{Reject} if \Call{TestAdditivity}{$f'$} returns \textbf{Reject}\;
  \For{$N_{\ref{alg:distribution-free-linearity}} := O(1/\epsilon)$ times}{
    Sample $p \sim \mathcal{D}$\;
    \textbf{Reject} if $f'(p) \neq$ \Call{Query-$g$}{$f', p$} or if \Call{Query-$g$}{$f',p$} returns \textbf{Reject}.
  }
  \textbf{Accept}.
\end{algorithm}

\begin{algorithm}[!t]
  \caption{Force Negativity Subroutine}\label{alg:forcenegativity-subroutine}
\Procedure{\emph{\Call{ForceNegativity}{$f, \mathcal{D}$}}}{
    \Given{query-Access to $f\colon \mathbb{R}^n \rightarrow \mathbb{R}$ and sampling access to an unknown distribution $\mathcal{D}$;}
  	\For{$N_{\ref{alg:forcenegativity-subroutine}} := O(1/\epsilon)$ times}{
      Sample $x \sim \mathcal{D}$\;
    	\textbf{Reject} if $f(-x) \neq -f(x)$\;
  	}
  	\textbf{Return} a function $f'\colon \mathbb{R}^n \to \mathbb{R}$ where $f'(x):=\frac{f(x)- f(-x)}{2}$\;
  }
\end{algorithm}


\begin{claim}\label{clm:f-forced-close}
 	If \Call{ForceNegativity}{$f,\mathcal{D}$} accepts with probability at least $1/10$, then $\Pr_{x\sim \mathcal{D}}[f(x) =  f'(x)] \geq 1-\varepsilon$.
\end{claim}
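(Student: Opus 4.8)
The plan is to reduce the claim to a statement about a single sampled point, and then apply the standard repetition bound for one-sided tests. First I would record the elementary algebraic identity that, for any fixed $x \in \mathbb{R}^n$, the equality $f(x) = f'(x)$ holds if and only if $f(-x) = -f(x)$: indeed $f(x) = \frac{f(x) - f(-x)}{2}$ rearranges to $2f(x) = f(x) - f(-x)$, i.e.\ $f(x) = -f(-x)$. Consequently $\Pr_{x \sim \mathcal{D}}[f(x) = f'(x)] = \Pr_{x \sim \mathcal{D}}[f(-x) = -f(x)]$, so it suffices to lower bound the latter probability by $1 - \varepsilon$.

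Next I would argue the contrapositive. Suppose $\Pr_{x \sim \mathcal{D}}[f(-x) = -f(x)] < 1 - \varepsilon$, equivalently $\Pr_{x \sim \mathcal{D}}[f(-x) \neq -f(x)] > \varepsilon$. The subroutine \Call{ForceNegativity}{$f,\mathcal{D}$} accepts only if every one of its $N_{\ref{alg:forcenegativity-subroutine}}$ independent samples $x \sim \mathcal{D}$ satisfies $f(-x) = -f(x)$, so the probability that it accepts is at most $\left(1 - \Pr_{x \sim \mathcal{D}}[f(-x) \neq -f(x)]\right)^{N_{\ref{alg:forcenegativity-subroutine}}} < (1-\varepsilon)^{N_{\ref{alg:forcenegativity-subroutine}}} \le e^{-\varepsilon N_{\ref{alg:forcenegativity-subroutine}}}$. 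Choosing the hidden constant in $N_{\ref{alg:forcenegativity-subroutine}} = O(1/\varepsilon)$ large enough (anything exceeding $\ln 10$ works) makes this quantity smaller than $1/10$, contradicting the hypothesis that the subroutine accepts with probability at least $1/10$. Hence $\Pr_{x \sim \mathcal{D}}[f(-x) = -f(x)] \ge 1 - \varepsilon$, and combining with the first paragraph gives $\Pr_{x \sim \mathcal{D}}[f(x) = f'(x)] \ge 1 - \varepsilon$.

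This argument is entirely routine; there is no real obstacle. The only points requiring a line of care are (i) making the algebraic equivalence $f(x)=f'(x) \Leftrightarrow f(-x)=-f(x)$ explicit, since it is what lets us translate the statement about $f'$ into one about a testable local condition on $f$, and (ii) fixing the constant in $N_{\ref{alg:forcenegativity-subroutine}}$ so that the acceptance probability drops below the threshold $1/10$ used throughout the paper. No appeal to earlier lemmas is needed beyond the convention on hidden constants.
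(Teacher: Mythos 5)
Your proposal is correct and follows essentially the same route as the paper's proof: establish the pointwise equivalence $f(x)=f'(x) \Leftrightarrow f(-x)=-f(x)$, then argue by contradiction that if the failure probability exceeded $\varepsilon$, the acceptance probability of \textsc{ForceNegativity} would fall below $1/10$ after $O(1/\varepsilon)$ independent samples. Your write-up is in fact slightly more explicit than the paper's about the algebraic equivalence and the choice of constant.
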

\begin{proof}
  Suppose for contradiction that $\Pr_{x\sim \mathcal{D}}[f(x) =  f'(x)] \leq 1-\varepsilon$.
 	Observe that for a point $x \in \mathbb{R}$, $f'(x) \neq f(x)$ iff $f(-x) \neq -f(x)$. Therefore, by choosing the hidden constant in $N_{\ref{alg:forcenegativity-subroutine}}$ to be large enough, the probability that all the sampled points $x$ satisfy $f(x) = -f(x)$ is at most
 	\[ {\Big( \Pr_{x \sim \mathcal{D}}[f(-x) = f(x)] \Big)}^{N_{\ref{alg:forcenegativity-subroutine}}} < {(1- \varepsilon)}^{N_{\ref{alg:forcenegativity-subroutine}}} \leq \frac{1}{10}, \]
  which is a contradiction.
 \end{proof}
 Therefore if \Call{ForceNegativity}{$f,\mathcal{D}$} accepts with probability at least $1/10$, $f$ and $f'$ are $\varepsilon/2$-close. Furthermore, because $f$ is continuous and $f'$ is the sum of continuous functions, $f'$ is continuous as well, and so we can proceed with $f'$ in place of $f$.

 \begin{proof}[Proof of Theorem~\ref{thm:intro-linearity}]
 	First observe that if $f$ is linear then $f = f'$ and Algorithm~\ref{alg:distribution-free-linearity} always accepts.

  Now, we show that if $f$ is $\varepsilon$-far from linear functions, then Algorithm~\ref{alg:distribution-free-linearity} rejects with probability at least $2/3$.
  If either the \Call{TestAdditivity}{} subroutine or the \Call{ForceNegativity}{} subroutine passes with probability at most $1/10$, we can reject $f$ with probability at least $1-1/10>2/3$.
  Hence, we assume both the subroutines pass with probability at least $1/10$.
  Then by Lemma~\ref{clm:f-forced-close}, $f$ is $\varepsilon/2$-close to $f'$, which means that $f'$ is $\varepsilon/2$-far from linear.
  Also by Lemma~\ref{lem:continuous-f-linear-g}, because $f'$ is continuous and satisfies $f'(-x) = -f'(x)$, the function $g$ is linear, and so $f'$ is $\varepsilon/2$-far from $g$.
  Therefore, Algorithm~\ref{alg:distribution-free-linearity} rejects $f$ with probability at least $1-1/10 > 2/3$.
 \end{proof}



\section{Lower Bounds on Testing Linearity in the Sampling Model}\label{sec:lower-bound}

In this section, we prove Theorem~\ref{thm:intro-lower-bound}, that is, we show without query access, any tester requires a linear number of samples in order to test linearity and additivity over the standard Gaussian.
We note that we can obtain the same lower bound for testing additivity just by replacing linearity with additivity in the proof.


By Yao's minimax principle it suffices to construct two distributions, $\mathcal{D}_\mathrm{yes}$ over linear functions and $\mathcal{D}_\mathrm{no}$ over functions which are (with probability $1$) $1/3$-far from linear such that any deterministic $n$-sample algorithm cannot distinguish between them with probability at least $2/3$. Let $\delta \in \mathbb{R}^{\geq 0}$ be some  parameter to be set later; we will think of $\delta$ as tiny. Instances from these two distributions are generated as follows:
\begin{enumerate}
	\item[$\mathcal{D}_\mathrm{yes}$]: Sample $w \sim \mathcal{N}(0,I)$ and return $f(x) :=\langle w,x \rangle$.
	\item[$\mathcal{D}_\mathrm{no}$]: Sample $w \sim \mathcal{N}(0,I)$ and for every $x \in \mathbb{R}^n$ sample $\varepsilon_x \sim {\mathcal{N}(0, \delta)}$. Return $f(x) := \langle w, x \rangle +\varepsilon_x$.
\end{enumerate}
The functions in the support of $\mathcal{D}_\mathrm{yes}$ are linear by definition. It remains to show that the instances in the support of $\mathcal{D}_\mathrm{no}$ are far from linear.
\begin{lem}\label{lem:no-far-from-linear}
	With probability $1$ any $f \sim \mathcal{D}_\mathrm{no}$ is $1/3$-far from linear.
\end{lem}
The proof of this lemma will hinge on the following claim.

\begin{claim}\label{clm:rows-contain-one-bad}
  Let $f \sim \mathcal{D}_\mathrm{no}$, for $x,y,z \sim \mathcal{N}(0,1)$,
  $ \Pr[f(\frac{x-y}{2}) \neq f(\frac{x-z}{2}) + f( \frac{z-y}{2}) ] = 1$.
\end{claim}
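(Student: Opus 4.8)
The plan is to strip away the linear part and reduce the claim to the elementary fact that a nondegenerate one-dimensional Gaussian assigns zero mass to any fixed value. Write $f(v) = \langle w, v\rangle + \varepsilon_v$ as in the definition of $\mathcal{D}_\mathrm{no}$. Since $\frac{x-y}{2} = \frac{x-z}{2} + \frac{z-y}{2}$ holds as an identity of vectors, the linear part is exactly additive along this triple, so
\[
  f\left(\frac{x-y}{2}\right) - f\left(\frac{x-z}{2}\right) - f\left(\frac{z-y}{2}\right) \;=\; \varepsilon_{(x-y)/2} - \varepsilon_{(x-z)/2} - \varepsilon_{(z-y)/2},
\]
and it therefore suffices to prove that this combination of noise terms is almost surely nonzero.

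First I would condition on $w$ and on $x,y,z$. The three points $\frac{x-y}{2}$, $\frac{x-z}{2}$, $\frac{z-y}{2}$ fail to be pairwise distinct only if $y=z$, $x=z$, or $x+y=2z$, each of which is a probability-zero event because $x,y,z$ have (Gaussian, hence continuous) distributions; so almost surely the three points are distinct. On this event $\varepsilon_{(x-y)/2}$, $\varepsilon_{(x-z)/2}$, $\varepsilon_{(z-y)/2}$ are three independent $\mathcal{N}(0,\delta)$ variables, independent of $(w,x,y,z)$, so $\varepsilon_{(x-y)/2} - \varepsilon_{(x-z)/2} - \varepsilon_{(z-y)/2}$ is a centered, nondegenerate Gaussian (here we use $\delta>0$) and hence equals $0$ with probability $0$. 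Integrating over the conditioning gives that $f\left(\frac{x-y}{2}\right) = f\left(\frac{x-z}{2}\right) + f\left(\frac{z-y}{2}\right)$ with probability $0$, which is exactly the claim.

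The only subtle point is that ``sample $\varepsilon_x$ for every $x \in \mathbb{R}^n$'' describes an uncountable family of i.i.d.\ Gaussians, which is not a canonically measurable object; but the statement only involves the restriction of this family to the three random points above, and for any fixed pairwise-distinct $v_1, v_2, v_3$ the joint law of $(\varepsilon_{v_1}, \varepsilon_{v_2}, \varepsilon_{v_3})$ is simply the product $\mathcal{N}(0,\delta)^{\otimes 3}$ — which is all the argument uses, and is presumably why the claim is phrased as a probability over $x,y,z$ rather than a property of a fixed triple of points. Beyond being careful to condition before invoking independence of the $\varepsilon$'s, I do not expect any real obstacle here.
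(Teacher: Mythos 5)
Your proof is correct and follows essentially the same route as the paper's: cancel the linear part using the identity $\frac{x-y}{2} = \frac{x-z}{2} + \frac{z-y}{2}$, reduce to the event that a centered nondegenerate Gaussian combination of the noise terms equals zero, and conclude this has probability zero. You are in fact slightly more careful than the paper, which tacitly treats the three noise variables as independent $\mathcal{N}(0,\delta)$'s without noting that this requires the three points to be almost surely pairwise distinct — a detail your conditioning argument supplies.
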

\begin{proof}
  Observe that
  $\Pr \left[f \left(\frac{x-y}{2} \right) = f \left(\frac{x-z}{2} \right) + f \left( \frac{z-y}{2} \right) \right] = \Pr \left[ \varepsilon_{(x-y)/2} = \varepsilon_{(x-z)/2}+  \varepsilon_{(z-y)/2}  \right]$, where the probability is over $\varepsilon_{(x-y)/2}, \varepsilon_{(x-z)/2}, \varepsilon_{(z-y)/2} \sim \mathcal{N}(0,\delta)$. Define the random variable $z:=\varepsilon_{(x-y)/2} - \varepsilon_{(x-z)/2} -\varepsilon_{(z-y)/2}$, and note that $z$ is distributed according to $\mathcal{N}(0,3 \delta)$. Then \[\Pr_{z \sim \mathcal{N}(0,3)} \left[ \varepsilon_{(x-y)/2} = \varepsilon_{(x-z)/2}+  \varepsilon_{(z-y)/2}  \right] = \Pr_{z \sim \mathcal{N}(0,3)}[z = 0].\]
  By standard arguments, we have $\Pr_{z \sim \mathcal{N}(0,3 \delta)}[z = 0]  =0$.
\end{proof}

\begin{proof}[Proof of Lemma~\ref{lem:no-far-from-linear}]
	Let $f^*$ be the closest linear function to $f$. For a point $x \in \mathbb{R}^n$, say that $f(x)$ is \emph{bad} if $f(x) \neq f(x^*)$.
Construct the following matrix: the rows are labelled by every triple $(\frac{x-y}{2}, \frac{x-z}{2}, \frac{z-y}{2})$ and there are three columns. The entries at row $(\frac{x-y}{2}, \frac{x-z}{2}, \frac{z-y}{2})$ are $f(\frac{x-y}{2})$, $f(\frac{x-z}{2})$, and $f(\frac{z-y}{2})$. Note that because $x,y,z \sim \mathcal{N}(0,1)$, the points $\frac{x-y}{2}, \frac{x-z}{2}, \frac{z-y}{2}$ are distributed according to $\mathcal{N}(0,1)$.

Henceforth, we will measure mass in terms of probability mass over $\mathcal{N}(0,1)$.
By Claim~\ref{clm:rows-contain-one-bad}, the probability that each row contains a bad entry is $1$. Therefore, there must be some column for which the probability mass of the bad entries is at least $1/3$.
This implies that a mass of at least $1/3$ of $f$ must be changed to obtain $f^*$. Because $f^*$ is the closest linear function to $f$, this implies that $f$ is $1/3$-far from linear.
\end{proof}

Having defined our distributions over linear and far-from-linear functions, it remains to argue that no algorithm receiving $n$ samples can distinguish between them with high probability.

\begin{proof}[Proof of Theorem~\ref{thm:intro-lower-bound}]
	Let $\cal D$ be the distribution that with probability $1/2$ draws $f \sim \mathcal{D}_\mathrm{yes}$ and otherwise draws $f \sim \mathcal{D}_\mathrm{no}$. Let $A$ be any deterministic algorithm which receives $n$ samples $x_1,\ldots, x_n \sim \mathcal{N}(0,I)$.
	By Yao's minimax principle, it suffices to show that $A$ cannot correctly distinguish which distribution of the distributions $\mathcal{D}_\mathrm{yes}$ or $\mathcal{D}_\mathrm{no}$ a given sample $f \sim \mathcal{D}$ comes from with probability at least $2/3$. That is, we would like to show that
	\begin{align} \Big| \Pr_{\substack{f \sim \mathcal{D}_\mathrm{yes} \\ x_1,\ldots,x_n \sim \mathcal{N}(0,I)}}[A(f(x_1),\ldots,f(x_n)) = \mathrm{YES}] - \Pr_{\substack{f \sim \mathcal{D}_\mathrm{no} \\ x_1,\ldots,x_n \sim \mathcal{N}(0,I)}}[A(f(x_1),\ldots,f(x_n)) =\mathrm{YES}] \Big| \label{eq:to-bound}
	\end{align}
	is $o(1)$. Suppose for contradiction that an algorithm $A$ exists that with probability at least $2/3$ distinguishes these distributions.

	Observe that the~\eqref{eq:to-bound} can be bounded from above by the total variation distance between the distributions $(f^y(x_1),\ldots,f^y(x_n))$ for $f^y \sim \mathcal{D}_\mathrm{yes}$, and $(f^n(x_1),\ldots,f^n(x_n))$ for $f^n \sim \mathcal{D}_\mathrm{no}$, for $x_1, \ldots, x_n \sim \mathcal{N}(0,I)$, as applying the algorithm $A$ can only make the total variation distance smaller. By the definition of $\mathcal{D}_\mathrm{yes}$ and $\mathcal{D}_\mathrm{no}$, this means bounding the total variation distance between $(w_y^\top x_1, \ldots, w_y^\top x_n)$ and $(w_n^\top x_1 + \varepsilon_{x_1}, \ldots, w_n^\top x_n + \varepsilon_{x_n})$, where $w_y \sim \mathcal{D}_\mathrm{yes}$ and $w_n \sim \mathcal{D}_\mathrm{no}$

	Now, let $X \in \mathbb{R}^n$ be the matrix whose rows are $x_1, \ldots, x_n$. Because $w_y,w_n \sim \mathcal{N}(0,I)$ and $\varepsilon_{x_i} \sim \mathcal{N}(0,\delta)$, it follows that
	\begin{align*}
  (w^\top x_1, \ldots, w^\top x_n) &\sim \mathcal{N}(0, XX^\top), \\
  (w_n^\top x_1, \ldots, w_n^\top x_n) + (\varepsilon_{x_1}, \ldots, \varepsilon_{x_n}) & \sim \mathcal{N}(0, XX^\top + \delta I).
\end{align*}
	Therefore,
	\[ \eqref{eq:to-bound} \leq \dtv (\mathcal{N}(0, XX^\top),\mathcal{N}(0, XX^\top + \delta I) ).
\]
	To bound this distance we will appeal to Pinkser's inequality and Lemma~\ref{lem:TVD-bound-on-Gaussian}. Thus it will be useful to first record some facts about the covariance matrices of these distribution. First, we show that the rows of the matrix $X$ are linearly independent with high probability.
\begin{fact}\label{fact:linear-independence}
  $\Pr_{x_1, \ldots, x_n \sim \mathcal{N}(0,1)}[ \spn(x_1,\ldots, x_n) = \mathbb{R}^n ] = 1$. 
\end{fact}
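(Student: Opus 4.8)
\textbf{Proof proposal for Fact~\ref{fact:linear-independence}.} The plan is to show that, with probability $1$, the vectors $x_1, \ldots, x_n$ are linearly independent, which for $n$ vectors in $\mathbb{R}^n$ is equivalent to $\spn(x_1, \ldots, x_n) = \mathbb{R}^n$. I would prove this by peeling off the samples one at a time and using the fact that a proper linear subspace of $\mathbb{R}^n$ is a Lebesgue-null set, together with absolute continuity of the Gaussian.

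Concretely, for each $k \in [n]$ let $E_k$ be the event that $x_k \in \spn(x_1, \ldots, x_{k-1})$ (with $\spn(\emptyset) = \{0\}$). I would first observe that $\{x_1,\ldots,x_n\}$ is linearly dependent if and only if $E_k$ occurs for some $k$: indeed, given any nontrivial dependence, take the largest index $k$ with a nonzero coefficient and solve for $x_k$. So by a union bound it suffices to show $\Pr[E_k] = 0$ for every $k$. The key step is: condition on any fixed realization of $x_1, \ldots, x_{k-1}$. Their span $V$ is a linear subspace of $\mathbb{R}^n$ of dimension at most $k - 1 \le n - 1$, hence $V$ is a proper subspace and has Lebesgue measure zero. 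Since $x_k \sim \mathcal{N}(0, I)$ has a density with respect to Lebesgue measure on $\mathbb{R}^n$, we get $\Pr[x_k \in V] = 0$. Integrating this over the distribution of $(x_1, \ldots, x_{k-1})$ (Fubini/tower property, using that $x_k$ is independent of $x_1, \ldots, x_{k-1}$) yields $\Pr[E_k] = 0$. A union bound over $k \in [n]$ then gives $\Pr[\,\exists k: E_k\,] = 0$, so $\Pr[x_1, \ldots, x_n \text{ linearly independent}] = 1$, and hence $\Pr[\spn(x_1,\ldots,x_n) = \mathbb{R}^n] = 1$.

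There is essentially no obstacle here; the only thing to be slightly careful about is the measure-theoretic bookkeeping (that one may legitimately condition and integrate, which is fine since the $x_i$ are independent and all distributions involved are nice), and the elementary linear-algebra observation that any linear dependence can be rewritten as $E_k$ for the top nonzero index. Everything else is the standard "a proper subspace is null, and the Gaussian charges no null set" argument. If one prefers an even softer phrasing, one can instead note that $\det\big([x_1 \,\cdots\, x_n]\big)$ is a nonzero polynomial in the $n^2$ coordinates, hence its zero set is Lebesgue-null in $\mathbb{R}^{n^2}$, and the joint law of $(x_1,\ldots,x_n)$ is absolutely continuous with respect to that measure, so the determinant is nonzero almost surely.
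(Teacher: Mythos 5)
Your proof is correct. The paper states this fact without proof, so there is nothing to compare against; your argument (conditioning on $x_1,\ldots,x_{k-1}$, noting their span is a proper subspace and hence Lebesgue-null, and using absolute continuity of the Gaussian, with a union bound over $k$) is the standard and complete justification, and the alternative via the non-vanishing of the determinant polynomial is equally valid.
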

It follows that the covariance matrices of these two distributions are positive definite with high probability.
\begin{claim}\label{clm:covariance-pd}
	With probability $1$ the matrices $XX^\top$ and $XX^\top+ \delta I$ are positive definite.
\end{claim}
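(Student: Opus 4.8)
The plan is to reduce the positive-definiteness of both matrices to the invertibility of $X$, which follows immediately from Fact~\ref{fact:linear-independence}. Since $X \in \mathbb{R}^{n \times n}$ is a square matrix whose rows are $x_1, \ldots, x_n$, the event $\spn(x_1, \ldots, x_n) = \mathbb{R}^n$ is exactly the event that $X$ has full rank, i.e.\ that $X$ (equivalently $X^\top$) is invertible. By Fact~\ref{fact:linear-independence} this happens with probability $1$, so for the remainder of the argument I would condition on it.

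Conditioned on $X^\top$ being invertible, I would argue as follows. For any nonzero $v \in \mathbb{R}^n$ we have $v^\top X X^\top v = \|X^\top v\|_2^2$, and $X^\top v \neq 0$ because $X^\top$ is invertible and $v \neq 0$; hence $v^\top X X^\top v > 0$, so $XX^\top$ is positive definite. For the second matrix, using $\delta \geq 0$,
\[
  v^\top (XX^\top + \delta I) v = \|X^\top v\|_2^2 + \delta \|v\|_2^2 \geq \|X^\top v\|_2^2 > 0
\]
for every nonzero $v$, so $XX^\top + \delta I$ is positive definite as well. (One could equivalently observe that the sum of a positive definite matrix and the positive semidefinite matrix $\delta I$ is positive definite.)

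I do not expect any real obstacle here: the entire content is Fact~\ref{fact:linear-independence} together with the elementary identity $v^\top XX^\top v = \|X^\top v\|_2^2$. The only point worth a sentence is making explicit that $X$ is square, so that ``full row rank'' and ``invertible'' coincide; everything else is a one-line computation. If a self-contained justification of Fact~\ref{fact:linear-independence} were wanted, the standard argument is that the set of matrices with $\det X = 0$ has Lebesgue measure zero in $\mathbb{R}^{n \times n}$ (it is the zero set of a nonzero polynomial), and the joint law of $(x_1, \ldots, x_n)$ under i.i.d.\ $\mathcal{N}(0,I)$ is absolutely continuous with respect to Lebesgue measure, so this null event has probability $0$.
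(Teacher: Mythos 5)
Your proof is correct and follows essentially the same route as the paper: both arguments invoke Fact~\ref{fact:linear-independence} to get that $X$ has full rank (hence $XX^\top \succ 0$) and then observe that adding $\delta I$ preserves positive definiteness. Your version is slightly more explicit, spelling out the identity $v^\top XX^\top v = \|X^\top v\|_2^2$ where the paper calls this step immediate, and using the quadratic form rather than the eigenvalue shift for the second matrix, but these are cosmetic differences.
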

\begin{proof}
  $XX^\top \succ 0$ is immediate from the fact that by Fact~\ref{fact:linear-independence}, the rows of $X$ are linearly independent with probability $1$. Let $\lambda_1, \ldots, \lambda_n$ be the eigenvalues of $XX^\top$.  To prove that $XX^\top + \delta I \succ 0$ note that adding $\delta I$ simply adds $\delta$ to each of the eigenvalues. Thus, the eigenvalues of $XX^\top + \delta I$ are all positive.
\end{proof}
With these facts in hand we turn to bounding the total variation distance between $\mathcal{N}(0, XX^\top)$ and $\mathcal{N}(0, XX^\top + \delta I)$. Denote by $\Sigma_{\mathrm{YES}}:= XX^\top $ and $\Sigma_{\mathrm{NO}} := XX^\top + \delta I$.
 By Pinkser's inequality (Theorem~\ref{thm:Pinskers}) and Lemma~\ref{lem:TVD-bound-on-Gaussian},
\[ \dtv \Big(\mathcal{N}(0, XX^\top), \mathcal{N}(0, XX^\top + \delta I) \Big)
\leq \sqrt{ \frac{1}{4} \left( \log \left( \frac{\det \Sigma_{\mathrm{YES}}}{ \det \Sigma_{\mathrm{NO}} } \right) + \trace \left( \Sigma_{\mathrm{YES}}^{-1} \Sigma_{\mathrm{NO}} \right) - n \right)  }. \]
We will bound each of these terms separately.
\paragraph{Bounding the Determinant.}
For simplicity of notation, we will bound the inverse of $\det (\Sigma_{\mathrm{YES}})/\det (\Sigma_\mathrm{NO})$ below. We have
\begin{align*} \frac{\det \Sigma_{\mathrm{NO}}}{ \det \Sigma_{\mathrm{YES}} } &= \frac{\det ( XX^\top + \delta I)}{\det (XX^\top)} \\
&= \det \left(XX^\top{\left(XX^\top\right)}^{-1} +\delta {\left(XX^\top\right)}^{-1} \right) \\ 
&=\det \left(I + \delta{\left(XX^\top\right)}^{-1} \right).
\end{align*}

\begin{claim}
  If $A$ is a diagonalizable matrix with eigenvalues $\lambda_1, \ldots, \lambda_n$ then $\det (A +  I) = \prod_{i=1}^n (\lambda_i + 1)$.
\end{claim}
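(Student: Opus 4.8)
The plan is to prove this by diagonalizing $A$. Since $A$ is diagonalizable, we can write $A = P D P^{-1}$ where $D = \operatorname{diag}(\lambda_1, \ldots, \lambda_n)$. The key observation is that adding the identity commutes nicely with this decomposition: $A + I = P D P^{-1} + P I P^{-1} = P(D + I)P^{-1}$, so $A + I$ is similar to the diagonal matrix $D + I = \operatorname{diag}(\lambda_1 + 1, \ldots, \lambda_n + 1)$.

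From here, the result follows from two standard facts about determinants. First, similar matrices have the same determinant, since $\det(PMP^{-1}) = \det(P)\det(M)\det(P^{-1}) = \det(M)$. Applying this with $M = D + I$ gives $\det(A + I) = \det(D + I)$. Second, the determinant of a diagonal matrix is the product of its diagonal entries, so $\det(D + I) = \prod_{i=1}^n (\lambda_i + 1)$. Chaining these equalities yields $\det(A + I) = \prod_{i=1}^n (\lambda_i + 1)$, as desired.

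There is no real obstacle here; the statement is essentially a restatement of the fact that the eigenvalues of $A + I$ are exactly $\lambda_1 + 1, \ldots, \lambda_n + 1$ (with multiplicity), combined with the fact that the determinant is the product of eigenvalues. The only mild subtlety worth noting is that diagonalizability is exactly what licenses us to say the determinant equals the product of eigenvalues counted with algebraic multiplicity matching the $\lambda_i$ as listed; without it one would have to be more careful, but the hypothesis hands us this for free. I would write the proof in the two-line form above: decompose $A = PDP^{-1}$, observe $A + I = P(D+I)P^{-1}$, and conclude $\det(A+I) = \det(D+I) = \prod_{i=1}^n(\lambda_i + 1)$.

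(In the application, $A = \delta (XX^\top)^{-1}$, which is positive definite by Claim~\ref{clm:covariance-pd} and hence diagonalizable with positive eigenvalues, so the claim applies and gives $\det(I + \delta(XX^\top)^{-1}) = \prod_{i=1}^n (1 + \delta/\lambda_i)$ where $\lambda_i$ are the eigenvalues of $XX^\top$.)
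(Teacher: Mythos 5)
Your proof is correct; the paper actually states this claim without any proof, and your argument (diagonalize $A = PDP^{-1}$, observe $A+I = P(D+I)P^{-1}$, and use invariance of the determinant under similarity together with the determinant of a diagonal matrix) is exactly the standard justification the authors implicitly rely on. Your closing remark about the application to $A = \delta (XX^\top)^{-1}$ also matches how the claim is used in the paper.
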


Applying this claim, we have $\det(I + \delta{(XX^\top)}^{-1}) = (\delta \lambda_1^{-1} + 1) \ldots (\delta \lambda_n^{-1}+1)$, where $\lambda_1, \ldots, \lambda_n$ are the eigenvalues of $XX^\top$. By Claim~\ref{clm:covariance-pd} the matrix $XX^\top$ is positive definite and so $\lambda_i > 0$ for all $i$. Therefore, $(\delta \lambda_i^{-1} + 1) >1$ for all $i$, and we can conclude that $\det \Sigma_{\mathrm{NO}} / \det \Sigma_{\mathrm{YES}} > 1$. Thus we can upper bound $\det \Sigma_{\mathrm{YES}}/ \det \Sigma_{\mathrm{NO}}$ by $1$.


\paragraph{Bounding the Trace.} Next, we bound
\begin{align*} \trace \left(\Sigma_{\mathrm{YES}}^{-1} \Sigma_{\mathrm{NO}} \right) &= \trace \left( {\left(XX^\top\right)}^{-1}(XX^\top+\delta I) \right) \\
 &= \trace \left( I + \delta {\left(X^\top\right)}^{-1}X^{-1} \right) \\\
 &\leq \trace (I) + \delta \trace \left( {\left(X^\top\right)}^{-1}X^{-1} \right) \\
 &= n + \delta \sum_{i,j} {(X_{i,j}^{-1})}^2 \\
 &\leq n+ \delta n^2 \cdot  {\lambda_{\max}(X^{-1})}^2,
\end{align*}
where $\lambda_{\max}$ is the largest eigenvalue of $X^{-1}$.
Noting that the eigenvalues of $X^{-1}$ are the inverse of the eigenvalues of $X$, we have $\trace \left(\Sigma_{\mathrm{YES}}^{-1} \Sigma_{\mathrm{NO}} \right) \leq n + \delta n^2/ {\lambda_{\min}(X)}^2$. Setting $\delta := C {\lambda_{\min}(X)}^2/ n^2$ for some tiny $C > 0$ to be set later, we can conclude that $\trace \left(\Sigma_{\mathrm{YES}}^{-1} \Sigma_{\mathrm{NO}} \right) \leq n + C$.

\paragraph{Completing the proof.}

Putting our previous bounds together we conclude that
\[ \dtv \Big(\mathcal{N}(0, XX^\top), \mathcal{N}(0, XX^\top + \delta I) \Big) \leq \sqrt{ \frac{1}{4} \left(  \log(1) + n+C - n \right)} = \frac{1}{2}C^{1/2}.\]
By our previous argument we have
\[ (\ref{eq:to-bound}) \leq \dtv \Big(\mathcal{N}(0, XX^\top), \mathcal{N}(0, XX^\top + \delta I) \Big) \leq \frac{1}{2}C^{1/2}. \]
Setting $C < {(2/3)}^2$ contradicts our assumption of the existence of an algorithm $A$ which distinguishes a sample drawn from $\mathcal{D}_\mathrm{yes}$ from one drawn from $\mathcal{D}_\mathrm{no}$ with probability at least $2/3$, completing the proof.
\end{proof}

Finally, observe that the same proof goes through for testing additivity as well. Indeed, $\mathcal{D}_\mathrm{yes}$ is supported on additive functions, while $\mathcal{D}_\mathrm{no}$ is supported on functions which are far from additive with probability $1$. 

\begin{cor}
	Any sampler for additivity of functions $f\colon \mathbb{R}^n \rightarrow \mathbb{R}$ requires $\Omega(n)$ samples when $\mathcal{D} = \mathcal{N}(0,I)$.
\end{cor}

\bibliographystyle{abbrv}
\bibliography{biblio}

\end{document}